\def\ps@headings{%
\def\@oddhead{\mbox{}\scriptsize\rightmark \hfil \thepage}%
\def\@evenhead{\scriptsize\thepage \hfil \leftmark\mbox{}}%
\def\@oddfoot{}%
\def\@evenfoot{}}
\newtheorem{theorem}{Theorem}
\newtheorem{theo}{Theorem}
\newtheorem{definition}[theo]{Definition}
\begin{document}
%
\title{Technical Report: Sleep-Route -- Routing through Sleeping Sensors}

\author{\IEEEauthorblockN{Chayan Sarkar, Vijay S. Rao, R. Venkatesha Prasad}
\IEEEauthorblockA{Embedded Software Group\\
Delft University of Technology, The Netherlands\\
\{C.Sarkar, V.Rao, R.R.VenkateshaPrasad\}@tudelft.nl
}}


%


\maketitle

\begin{abstract}
In this article, we propose an energy-efficient data gathering scheme for wireless sensor network called \emph{Sleep-Route}, which splits the sensor nodes into two sets -- active and dormant (low-power sleep). Only the active set of sensor nodes participate in data collection. The sensing values of the dormant sensor nodes are predicted with the help of an active sensor node. Virtual Sensing Framework (VSF) provides the mechanism to predict the sensing values by exploiting the data correlation among the sensor nodes. If the number of active sensor nodes can be minimized, a lot of energy can be saved. The active nodes' selection must fulfill the following constraints - (i)~the set of active nodes are sufficient to predict the sensing values of the dormant nodes, (ii)~each active sensor nodes can report their data to the sink node (directly or through some other active node(s)). The goal is to select a minimal number of active sensor nodes so that energy savings can be maximized.

The optimal set of active node selection raise a combinatorial optimization problem, which we refer as \emph{Sleep-Route} problem. We show that Sleep-Route problem is NP-hard. Then, we formulate an integer linear program (ILP) to solve the problem optimally. To solve the problem in polynomial time, we also propose a heuristic algorithm that performs near optimally.
\end{abstract}



%
\IEEEpeerreviewmaketitle
%
%
\section{Background}
\label{background}
Since Sleep-Route works in conjunction with VSF, we provide an overview of VSF in this section. Every sensor node in a WSN senses a physical parameter at a predefined interval and transmits this data to the sink node. Usually, the data collected from various sensors show correlation among themselves -- some sensors are highly correlated while some are less correlated. Let us take the example in Fig.~\ref{fig:wsn}, if two sensors a and d are reporting highly correlated data, the data of d can be predicted (with high accuracy) from the data of a, without actually sensing the physical parameter by d. d can be kept dormant and can save energy. VSF proposes to save energy by exploiting this correlation for wireless sensor nodes \cite{sarkar2013nosense}. This situation can occur when sensors use energy harvesting wherein some nodes harvest energy for a short duration and die after using that energy\cite{prasad2014reincarnation}. To predict the sensor data, a virtual sensor (VS) is created for each physical sensor (PS) at the sink as shown in Fig.~\ref{fig:wsn}. A VS instructs its associated PS on its state and activity for the forthcoming sensing intervals. 

From the datasets collected from real-deployment~\cite{nrel,sensorscope,madden2004}, it is found that physical parameter values (e.g., ambient temperature, humidity, etc.) do not change abruptly over a short time span. Therefore, these values have correlation with their immediate past values (temporal correlation). Hence, VSF exploits temporal as well as spatial correlations in the data collected from the sensors to fine-tune the prediction. In active mode, all the circuits of a sensor node such as microprocessor, radio, sensor, clock, etc., remains switched on. As a result, the sensor node spends higher energy in active mode. On the other hand, most of the circuits of a sensor node remain switched off in dormant state, and the energy consumption is minimal. To perform any operation, a sensor node has to be in active mode and it is imperative that number of active nodes should be minimized. In order to keep the sensor network alive for a longer period, energy consumption of the sensor nodes in the network needs to be balanced over time. Thus, state of the nodes switch between dormant and active after certain number of sensing intervals. 
\begin{figure}
\centering
	\begin{subfigure}[b]{0.42\linewidth}
		\includegraphics[width=\textwidth]{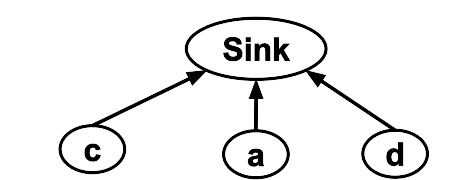}
		\caption{}
		\label{fig:scenario1}
	\end{subfigure}
\hspace{2em}
	\begin{subfigure}[b]{0.42\linewidth}
		\includegraphics[width=\textwidth]{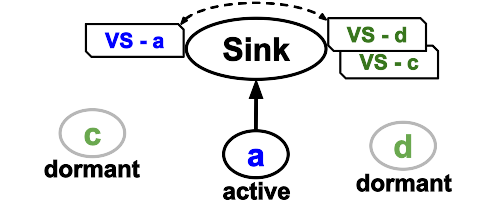}
		\caption{}
		\label{fig:scenario2}
	\end{subfigure}
\caption{(a) Data collection scenario in a WSN (star topology); (b) Data collection with virtual sensing framework.} 
\label{fig:wsn}
\end{figure}

The VSF does not assume any \textit{a priori} knowledge about the sensor data statistics. It captures the correlation amongst sensors on the fly, monitors the change in the correlation, and adapts dynamically. To accomplish this, the data collection period is divided into three phases -- training period, operational period and revalidation period as shown in Fig.~\ref{fig:vs_train}. 
\begin{figure}
\centering
\includegraphics[width=\linewidth]{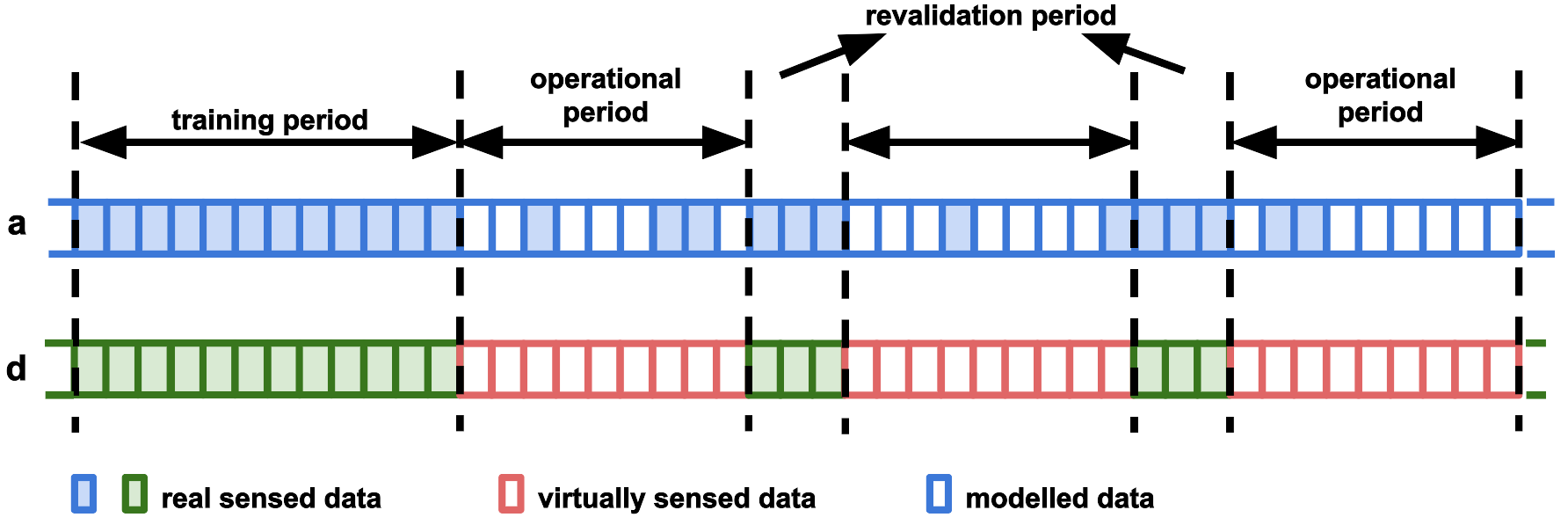}
\caption{Data collection phases in virtual sensing technique.}
\label{fig:vs_train}
\end{figure}

During the training period, all nodes remain active and report their sensed data for certain sensing intervals (data samples) to the sink node. Using these samples, the sink node creates a temporal correlation based predictor for each node. The temporal predictor used by the VSF is a transversal filter, which is created by mean-square error method. Further, these training samples are exploited to find correlated node pairs. For each node in a pair, a separate spatial correlation based predictor is created. VSF uses linear regression to find a spatial predictor. 

At the beginning of the operational period, each sensor is informed about its operating state (i.e. active/dormant) by the sink node. If a node becomes dormant during the operational period, the virtual sensor uses a hybrid filter (combination of the temporal and spatial predictor) to predict the sensed value.
It is hard, if not impossible, to find the accuracy of the predicted values at VS of a dormant sensor vis-\'a-vis the values if the sensor would have been active. To tackle this, the dormant nodes become active again for a small duration after each operational period. During this period, called revalidation period, all the sensor nodes become active and report their sensed data to the sink node. Then, the sink node checks the correlations amongst the sensors and readjust the parameters of the prediction model for the virtual sensors if required. This ensures higher prediction accuracy in the next operational period. If the sink node identifies a large drift in the correlation pattern -- in turn predicted values -- during revalidation, another training period is initiated. Otherwise, before resuming another operational period after the revalidation period, the sink node picks a new set of active nodes in order to balance the energy expenditure by each sensor node.

\section{Problem Description}
\label{sec:problem}
In a large WSN, the VSF provides a mechanism to find correlations amongst nodes, and classify highly correlated nodes into ``groups''. We refer to such a group as \emph{constellation}. The constellation of nodes, formed by VSF, has fundamental difference with traditional clustering technique, where sensors are clustered based on their geographical collocation. VSF, however, can put any two nodes into the same constellation if they are highly correlated irrespective of their geographical location. One representative node generating data from a constellation is enough to predict the data of other nodes in that constellation. Therefore, VSF selects $m$ nodes to be active if there are $m$ constellations while keeping the remaining nodes as dormant. However, VSF fails if these active nodes are isolated i.e., they are not in the communication range of the sink and a multihop path to the sink cannot be built due to many dormant nodes. 

For $m$ constellations, there can be exponentially many combinations of active nodes, each with at least $m$ active nodes. The immediate challenge is to select a particular combination of active nodes representing $m$ constellations. We aim to select the one that requires least amount of energy to report their data to the sink. This is to reduce the overall energy consumption of the WSN in order to increase the network lifetime.  A combination of active nodes with more than $m$ active nodes, but least energy consuming should be preferred over a combination with lesser number of active nodes but consuming more energy. Therefore, it is required to find a set of active nodes that requires minimum energy to report their data to the sink, which also satisfies the following conditions -- (a)~they form a connected graph, i.e., all of them can reach the sink node; (b)~every constellation has at least one representative sensing node so that the data for all the dormant nodes within that constellation can be predicted with a tolerable error bound. We term this optimization problem as \emph{Sleep-Route}.

We formulate the problem with following considerations: (i)~All sensor nodes are homogeneous, i.e., they all have same transmission range and same energy consumption for various operations (sensing, transmit, receive, etc.,) in active/dormant mode.
(ii)~The network is denoted as a node-weighted undirected graph $G=(S,L)$, where $S$ is the set of all sensor nodes in the network and $L$ is the set of existing links. 
(iii)~The nodes are associated with non-zero cost (node-weighted), which is defined by their energy consumption based on their mode of operation. If node $i$ and $j$ are in active and dormant mode, their energy consumptions are denoted as $e_{a}(i)$ and $e_{d}(j)$ respectively.
(iv)~A link between node $i$ and node $j$ exists if they are within their transmission range. The link cost is defined by the energy consumption for transmitting one packet from node $i$ to $j$, and denoted by $e_{l}(i,j)$. For simplicity, we assume that $e_{l}(i,j) = e_{l}(j,i)$.
With this background, we now provide a formal definition of the sleep-route problem.

\begin{definition} Sleep-Route problem: Given a node-weighted undirected graph $G=(S,L)$ and a collection of $m$ sets $S_{i}$ such that $S_{i}\bigcap S_{j}=\phi,\;\forall i\neq j$ and $\bigcup_{i=1:m} S_{i}=S$, find the minimum cost subgraph $G'=(A,L')$. That is, $\sum_{i\in A} e_{a}(i) + \sum_{(i,j)\in L'} e_{l}(i,j)$ is minimum, such that (i)~$A$ is connected, and (ii)~$A$ includes at least one node from each $S_{i}$, i.e., $S_{i}\bigcap A \ne \phi,\ \ \forall i$.
\end{definition}

\section{Time-Complexity and Solution of Sleep-Route Problem}
\label{sec:sr}
In this section, first, we discuss the time-complexity of the problem. We found that the Sleep-Route problem is NP-hard. To find a optimal solution for the problem, we formulate the problem as an Integer Linear Program (ILP). Then, we provide a heuristic algorithm, which finds a near optimal solution in polynomial time. Let's first prove the NP-hardness of the problem.

\begin{theorem}\label{Thm1}
Sleep-Route (decision) problem is NP-complete. The optimization version of Sleep-Route is NP-hard. 
\end{theorem}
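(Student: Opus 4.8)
The plan is to establish the two claims separately: membership in NP for the decision version, and NP-hardness via a polynomial-time reduction from a classical NP-complete problem. For membership, I would phrase the decision version as: given the instance together with a budget $k$, does there exist a connected subgraph $A$ meeting every $S_i$ whose total cost is at most $k$? A proposed subgraph $(A,L')$ is a certificate of size polynomial in $|S|$, and verifying that $A$ is connected (say by a breadth-first search over $L'$), that $S_i\cap A\neq\phi$ for every $i$, and that $\sum_{i\in A} e_a(i)+\sum_{(i,j)\in L'} e_l(i,j)\le k$, all take polynomial time. Hence the decision problem lies in NP.

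For hardness I would reduce from \emph{Minimum Set Cover}: given a universe $U=\{u_1,\dots,u_n\}$, a family $C_1,\dots,C_p\subseteq U$, and an integer $k$, decide whether some $k$ of the $C_j$ cover $U$. From such an instance I would build a graph with one \emph{element node} $x_i$ per $u_i$, one \emph{set node} $c_j$ per $C_j$, and a single \emph{root} $r$; I would add an edge $r\,c_j$ for every $j$ and an edge $c_j\,x_i$ whenever $u_i\in C_j$. The crucial modelling step is the grouping: each element node forms its own singleton constellation $S_i=\{x_i\}$, while the root and \emph{all} set nodes are absorbed into one constellation $S_0=\{r,c_1,\dots,c_p\}$. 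I would give node weight $1$ to each $c_j$ and a uniform tiny weight $\varepsilon$ to $r$, to every $x_i$, and to every edge, chosen small enough that the total auxiliary weight stays below $1$ and so cannot offset even a single extra set node.

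The correctness argument runs as follows. Because each $x_i$ is its own constellation, every feasible $A$ must contain all element nodes; since the only neighbours of $x_i$ are the $c_j$ with $u_i\in C_j$, connectivity forces at least one such $c_j$ into $A$, so the selected set nodes cover $U$, and the root discharges $S_0$ for free. Conversely, from a set cover I would take $r$, all $x_i$, and the cover's set nodes: every $x_i$ reaches a chosen $c_j$ and every chosen $c_j$ reaches $r$, so the subgraph is connected, and its cost equals the number of chosen sets plus a sub-unit $\varepsilon$ term. Thresholding at $k+\tfrac12$ therefore makes a budget-$k$ Sleep-Route solution exist iff a $k$-set cover exists, and the whole construction is plainly polynomial. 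Together with membership in NP, this shows the decision problem is NP-complete and the optimization version NP-hard.

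I expect the main obstacle to be respecting the \emph{partition} requirement $\bigcup_i S_i=S$ with $S_i\cap S_j=\phi$. A naive reduction that makes every vertex a singleton group would force $A$ to span all of $S$ and collapse the problem into a minimum spanning tree (polynomial), while a single all-encompassing group is trivially solved by the cheapest node; the reduction must live strictly between these extremes. Absorbing the root and set nodes into one group---so that they are not individually mandated---while keeping the element nodes as singletons is exactly what achieves this. A secondary subtlety is that an arbitrary set cover need not induce a connected subgraph among the $c_j$; introducing the root $r$ adjacent to every set node repairs this at negligible cost and is essential to the `if' direction.
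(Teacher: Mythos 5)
Your proof is correct, but it takes a genuinely different route from the paper. The paper argues NP-hardness by identifying Sleep-Route with the node-weighted Group Steiner Tree problem (constellations play the role of groups) and then invoking the known NP-hardness of GST; your proof instead gives an explicit polynomial-time reduction from Minimum Set Cover (element nodes as singleton constellations, a root plus all set nodes absorbed into one constellation, unit weight on set nodes and negligible weight elsewhere). Your approach is more self-contained and, notably, it handles a point the paper glosses over: Sleep-Route requires the constellations to \emph{partition} $S$ (disjoint, covering all nodes), whereas GST groups need not be disjoint nor cover the vertex set, so the paper's claimed ``one-to-one mapping'' between the two problems is not immediate as stated. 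Your construction is engineered precisely to live within the partition restriction, which makes the hardness result stronger in the sense that it applies to the problem exactly as defined. Your NP-membership argument also correctly includes verifying the cost budget, a check the paper's verifier omits. One small repair: you require the total auxiliary ($\varepsilon$-weighted) cost to stay below $1$ but threshold the budget at $k+\tfrac12$; for the ``cover $\Rightarrow$ cheap subgraph'' direction to clear that threshold you should choose $\varepsilon$ so that the auxiliary cost stays below $\tfrac12$ (or equivalently threshold strictly below $k+1$) --- a trivial adjustment of constants, not a gap in the idea.
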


\begin{proof}
It is easy to show that Sleep-Route $\in$ NP, since a nondeterministic algorithm needs only to find a subgraph $A$ and then verify in polynomial time that (i)~it includes at least one node from each constellation (complexity is O(n)), and (ii)~it is connected (complexity of DFS is O(n)). Therefore, Sleep-Route (decision) problem is NP.

To show that Sleep-Route is NP-hard, we reduce the problem into the node-weighted Group Steiner Tree (GST) problem. The classical Steiner Tree problem is defined as:

\begin{definition}[Steiner Tree (ST) problem] Given a graph $G=(S,L)$ with all links have non-zero cost (weight) and a set $A\subseteq S$, find the minimum-cost tree that spans nodes in $A$.
\end{definition}

This is one of the first problem shown to be NP-hard by Karp \cite{karp1972reducibility}. A Group Steiner Tree is a generalized version of the Steiner tree problem.

\begin{definition}[Group Steiner Tree (GST) problem] Given a graph $G=(S,L)$ with non-zero link cost, and a collection $S_{1}, S_{2}, ..., S_{m}$ of node sets called groups. Find a minimum-cost connected subgraph of $G$ that contains at least one node from each group.
\end{definition}
Being a generalized version of the Steiner Tree problem, the Group Steiner Tree (GST) problem is also NP-hard \cite{ihler1999class}. A general node-weighted Steiner Tree problem is an extension of the classical Steiner Tree problem, where nodes have non-zero cost. 

We map the WSN in Sleep-Route into a graph in GST. For each sensor node $s \in S$, they are the nodes in the graph in GST. The constellations in Sleep-Route are the groups of nodes in GST. The links and their respective weights are calculated based on the energy consumed to deliver one packet through that link, thus, are the same for both Sleep-Route and GST. Similarly, the node weights are calculated based on the energy consumed by a node for being active and sensing. Thus, the transformed graph becomes a graph in Sleep-Route problem. There is a one-to-one mapping between the Sleep-Route problem and the node-weighted GST problem. As a result, a ``yes-instance'' of GST, will always reply ``yes'' for Sleep-Route. Similarly, Sleep-Route will never reply a ``yes'' to a ``no-instance'' in GST. Therefore, the decision version of Sleep-Route is Np-complete as this is the case for GST. By default, the optimization problem then becomes a NP-hard problem.
\end{proof}

\subsection{Integer Programming Formulation of Sleep-Route}
As the \emph{Sleep-Route} problem is NP-hard, we propose a heuristic algorithm. To show how well the heuristic performs, we try to find an optimal solution using Integer Linear Programming (ILP) for smaller problems. Though ILP provides the optimal set of active nodes and links, ILP itself is NP-hard \cite{karp1972reducibility}. To formulate Sleep-Route as an ILP problem, we define a couple of decision variables. 

The operational mode of any node $i$ is denoted by $x_{a}(i) =\{0,1\}$, where 1 implies that the sensor node $i$ is active and 0 for when it is dormant. We represent the existing links in the network as an adjacency matrix ($\boldsymbol{L}$). If there exists a link between node $i$ and $j$, then $l(i,j)=1$; $l(i,j)=0$ otherwise. To decide inclusion (or exclusion) of a link in the optimal subgraph, we use a decision matrix $\boldsymbol{X_{l}}$, where each entry is associated with the corresponding element of the adjacency matrix. As the adjacency matrix is symmetric, we use only the upper triangular part of the decision matrix.

\begin{figure*}
\centering
	\begin{subfigure}[b]{0.2\linewidth}
		\includegraphics[width=\textwidth]{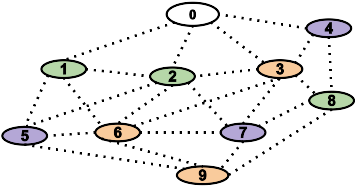}
		\caption{}
		\label{fig:const}
	\end{subfigure}
\hspace{2em}
	\begin{subfigure}[b]{0.2\linewidth}
		\includegraphics[width=\textwidth]{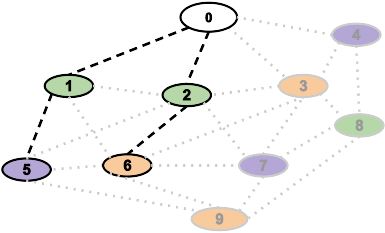}
		\caption{}
		\label{fig:valid_active}
	\end{subfigure}
\hspace{2em}
	\begin{subfigure}[b]{0.2\linewidth}
		\includegraphics[width=\textwidth]{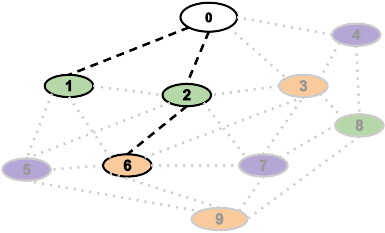}
		\caption{}
		\label{fig:invalid_active}
	\end{subfigure}
\hspace{2em}
	\begin{subfigure}[b]{0.2\linewidth}
		\includegraphics[width=\textwidth]{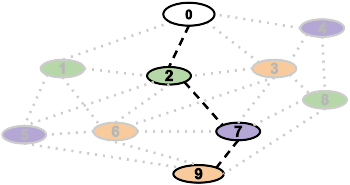}
		\caption{}
		\label{fig:opt_active}
	\end{subfigure}
\caption{(a) A WSN with 9 sensing nodes having three constellations of sensor nodes (marked with different colors); (b) a set of active nodes, but not every constellations are represented; (b) a set of connected active nodes representing all the constellations; (d) another set with minimum number of connected active nodes, which also represents all the constellations.} 
\label{fig:network}
\end{figure*}
\begin{table*}
\renewcommand{\arraystretch}{1.2}
\centering
\caption{Constellation Matrix and its transformations.}
\begin{subtable}[b]{0.45\linewidth}
\hspace{4em}
	\begin{tabular}{c|c|c|c|c|c|c|c|c|c|c|}
	\multicolumn{1}{c}{}  &  \multicolumn{1}{c}{0} & \multicolumn{1}{c}{1} & 
	\multicolumn{1}{c}{2} &  \multicolumn{1}{c}{3} & \multicolumn{1}{c}{4} & 
	\multicolumn{1}{c}{5} &  \multicolumn{1}{c}{6} & \multicolumn{1}{c}{7} & 
	\multicolumn{1}{c}{8} &  \multicolumn{1}{c}{9} \\ \cline{2-11}
	0 & 1 & 0 & 0 & 0 & 0 & 0 & 0 & 0 & 0 & 0 \\ \cline{2-11}
	1 & 0 & 1 & 1 & 0 & 0 & 0 & 0 & 0 & 1 & 0 \\ \cline{2-11}
	2 & 0 & 1 & 1 & 0 & 0 & 0 & 0 & 0 & 1 & 0 \\ \cline{2-11}
	3 & 0 & 0 & 0 & 1 & 0 & 0 & 1 & 0 & 0 & 1 \\ \cline{2-11}
	4 & 0 & 0 & 0 & 0 & 1 & 1 & 0 & 1 & 0 & 0 \\ \cline{2-11}
	5 & 0 & 0 & 0 & 0 & 1 & 1 & 0 & 1 & 0 & 0 \\ \cline{2-11}
	6 & 0 & 0 & 0 & 1 & 0 & 0 & 1 & 0 & 0 & 1 \\ \cline{2-11}
	7 & 0 & 0 & 0 & 0 & 1 & 1 & 0 & 1 & 0 & 0 \\ \cline{2-11}
	8 & 0 & 1 & 1 & 0 & 0 & 0 & 0 & 0 & 1 & 0 \\ \cline{2-11}
	9 & 0 & 0 & 0 & 1 & 0 & 0 & 1 & 0 & 0 & 1 \\ \cline{2-11}
	\multicolumn{1}{c}{} & \multicolumn{1}{c}{} & \multicolumn{1}{c}{} & \multicolumn{1}{c}{} & \multicolumn{1}{c}{} & \multicolumn{1}{c}{} & \multicolumn{1}{c}{} & \multicolumn{1}{c}{} \\
	\end{tabular}	
	\caption{}
	\label{table:mat1}
\end{subtable}
\hspace{2em}
\begin{subtable}[b]{0.45\linewidth}
	\begin{tabular}{c|c|c|c|c|c|c|c|c|c|c|}
	\multicolumn{1}{c}{}  &  \multicolumn{1}{c}{0} & \multicolumn{1}{c}{1} & 
	\multicolumn{1}{c}{2} &  \multicolumn{1}{c}{3} & \multicolumn{1}{c}{4} & 
	\multicolumn{1}{c}{5} &  \multicolumn{1}{c}{6} & \multicolumn{1}{c}{7} & 
	\multicolumn{1}{c}{8} &  \multicolumn{1}{c}{9} \\ \cline{2-11}
	{\color{red} $x_{a}(0)=1$} & {\color{red} 1} & {\color{red} 0} & {\color{red} 0} & {\color{red} 0} & {\color{red} 0} & {\color{red} 0} & {\color{red} 0} & {\color{red} 0} & {\color{red} 0} & {\color{red} 0} \\ \cline{2-11}
	{\color{red} $x_{a}(1)=1$} & {\color{red} 0} & {\color{red} 1} & {\color{red} 1} & {\color{red} 0} & {\color{red} 0} & {\color{red} 0} & {\color{red} 0} & {\color{red} 0} & {\color{red} 1} & {\color{red} 0} \\ \cline{2-11}
	{\color{red} $x_{a}(2)=1$} & {\color{red} 0} & {\color{red} 1} & {\color{red} 1} & {\color{red} 0} & {\color{red} 0} & {\color{red} 0} & {\color{red} 0} & {\color{red} 0} & {\color{red} 1} & {\color{red} 0} \\ \cline{2-11}
	$x_{a}(3)=0$ & 0 & 0 & 0 & 1 & 0 & 0 & 1 & 0 & 0 & 1 \\ \cline{2-11}
	$x_{a}(4)=0$ & 0 & 0 & 0 & 0 & 1 & 1 & 0 & 1 & 0 & 0 \\ \cline{2-11}
	{\color{red} $x_{a}(5)=1$} & {\color{red} 0} & {\color{red} 0} & {\color{red} 0} & {\color{red} 0} & {\color{red} 1} & {\color{red} 1} & {\color{red} 0} & {\color{red} 1} & {\color{red} 0} & {\color{red} 0} \\ \cline{2-11}
	{\color{red} $x_{a}(6)=1$} & {\color{red} 0} & {\color{red} 0} & {\color{red} 0} & {\color{red} 1} & {\color{red} 0} & {\color{red} 0} & {\color{red} 1} & {\color{red} 0} & {\color{red} 0} & {\color{red} 1} \\ \cline{2-11}
	$x_{a}(7)=0$ & 0 & 0 & 0 & 0 & 1 & 1 & 0 & 1 & 0 & 0 \\ \cline{2-11}
	$x_{a}(8)=0$ & 0 & 1 & 1 & 0 & 0 & 0 & 0 & 0 & 1 & 0 \\ \cline{2-11}
	$x_{a}(9)=0$ & 0 & 0 & 0 & 1 & 0 & 0 & 1 & 0 & 0 & 1 \\ \cline{2-11}
	\multicolumn{1}{c}{} & \multicolumn{1}{c}{{\color{blue}1}} & \multicolumn{1}{c}{{\color{blue}2}} & \multicolumn{1}{c}{{\color{blue}2}} & \multicolumn{1}{c}{{\color{blue}1}} & \multicolumn{1}{c}{{\color{blue}1}} & \multicolumn{1}{c}{{\color{blue}1}} & \multicolumn{1}{c}{{\color{blue}1}} & \multicolumn{1}{c}{{\color{blue}1}} & \multicolumn{1}{c}{{\color{blue}2}} & \multicolumn{1}{c}{{\color{blue}1}} \\
	\end{tabular}	
	\caption{}
	\label{table:mat2}
\end{subtable}
\hspace{2em}
\begin{subtable}[b]{0.45\linewidth}
	\begin{tabular}{c|c|c|c|c|c|c|c|c|c|c|}
	\multicolumn{1}{c}{}  &  \multicolumn{1}{c}{0} & \multicolumn{1}{c}{1} & 
	\multicolumn{1}{c}{2} &  \multicolumn{1}{c}{3} & \multicolumn{1}{c}{4} & 
	\multicolumn{1}{c}{5} &  \multicolumn{1}{c}{6} & \multicolumn{1}{c}{7} & 
	\multicolumn{1}{c}{8} &  \multicolumn{1}{c}{9} \\ \cline{2-11}
	{\color{red} $x_{a}(0)=1$} & {\color{red} 1} & {\color{red} 0} & {\color{red} 0} & {\color{red} 0} & {\color{red} 0} & {\color{red} 0} & {\color{red} 0} & {\color{red} 0} & {\color{red} 0} & {\color{red} 0} \\ \cline{2-11}
	{\color{red} $x_{a}(1)=1$} & {\color{red} 0} & {\color{red} 1} & {\color{red} 1} & {\color{red} 0} & {\color{red} 0} & {\color{red} 0} & {\color{red} 0} & {\color{red} 0} & {\color{red} 1} & {\color{red} 0} \\ \cline{2-11}
	{\color{red} $x_{a}(2)=1$} & {\color{red} 0} & {\color{red} 1} & {\color{red} 1} & {\color{red} 0} & {\color{red} 0} & {\color{red} 0} & {\color{red} 0} & {\color{red} 0} & {\color{red} 1} & {\color{red} 0} \\ \cline{2-11}
	$x_{a}(3)=0$ & 0 & 0 & 0 & 1 & 0 & 0 & 1 & 0 & 0 & 1 \\ \cline{2-11}
	$x_{a}(4)=0$ & 0 & 0 & 0 & 0 & 1 & 1 & 0 & 1 & 0 & 0 \\ \cline{2-11}
	$x_{a}(5)=0$ & 0 & 0 & 0 & 0 & 1 & 1 & 0 & 1 & 0 & 0 \\ \cline{2-11}
	{\color{red} $x_{a}(6)=1$} & {\color{red} 0} & {\color{red} 0} & {\color{red} 0} & {\color{red} 1} & {\color{red} 0} & {\color{red} 0} & {\color{red} 1} & {\color{red} 0} & {\color{red} 0} & {\color{red} 1} \\ \cline{2-11}
	$x_{a}(7)=0$ & 0 & 0 & 0 & 0 & 1 & 1 & 0 & 1 & 0 & 0 \\ \cline{2-11}
	$x_{a}(8)=0$ & 0 & 1 & 1 & 0 & 0 & 0 & 0 & 0 & 1 & 0 \\ \cline{2-11}
	$x_{a}(9)=0$ & 0 & 0 & 0 & 1 & 0 & 0 & 1 & 0 & 0 & 1 \\ \cline{2-11}
	\multicolumn{1}{c}{} & \multicolumn{1}{c}{{\color{blue}1}} & \multicolumn{1}{c}{{\color{blue}2}} & \multicolumn{1}{c}{{\color{blue}2}} & \multicolumn{1}{c}{{\color{blue}1}} & \multicolumn{1}{c}{{\color{blue}0}} & \multicolumn{1}{c}{{\color{blue}0}} & \multicolumn{1}{c}{{\color{blue}1}} & \multicolumn{1}{c}{{\color{blue}0}} & \multicolumn{1}{c}{{\color{blue}2}} & \multicolumn{1}{c}{{\color{blue}1}} \\
	\end{tabular}	
	\caption{}
	\label{table:mat3}
\end{subtable}
\hspace{2em}
\begin{subtable}[b]{0.45\linewidth}
	\begin{tabular}{c|c|c|c|c|c|c|c|c|c|c|}
	\multicolumn{1}{c}{}  &  \multicolumn{1}{c}{0} & \multicolumn{1}{c}{1} & 
	\multicolumn{1}{c}{2} &  \multicolumn{1}{c}{3} & \multicolumn{1}{c}{4} & 
	\multicolumn{1}{c}{5} &  \multicolumn{1}{c}{6} & \multicolumn{1}{c}{7} & 
	\multicolumn{1}{c}{8} &  \multicolumn{1}{c}{9} \\ \cline{2-11}
	{\color{red} $x_{a}(0)=1$} & {\color{red} 1} & {\color{red} 0} & {\color{red} 0} & {\color{red} 0} & {\color{red} 0} & {\color{red} 0} & {\color{red} 0} & {\color{red} 0} & {\color{red} 0} & {\color{red} 0} \\ \cline{2-11}
	$x_{a}(1)=0$ & 0 & 1 & 1 & 0 & 0 & 0 & 0 & 0 & 1 & 0 \\ \cline{2-11}
	{\color{red} $x_{a}(2)=1$} & {\color{red} 0} & {\color{red} 1} & {\color{red} 1} & {\color{red} 0} & {\color{red} 0} & {\color{red} 0} & {\color{red} 0} & {\color{red} 0} & {\color{red} 1} & {\color{red} 0} \\ \cline{2-11}
	$x_{a}(3)=0$ & 0 & 0 & 0 & 1 & 0 & 0 & 1 & 0 & 0 & 1 \\ \cline{2-11}
	$x_{a}(4)=0$ & 0 & 0 & 0 & 0 & 1 & 1 & 0 & 1 & 0 & 0 \\ \cline{2-11}
	$x_{a}(5)=0$ & 0 & 0 & 0 & 0 & 1 & 1 & 0 & 1 & 0 & 0 \\ \cline{2-11}
	$x_{a}(6)=0$ & 0 & 0 & 0 & 1 & 0 & 0 & 1 & 0 & 0 & 1 \\ \cline{2-11}
	{\color{red} $x_{a}(7)=1$} & {\color{red} 0} & {\color{red} 0} & {\color{red} 0} & {\color{red} 0} & {\color{red} 1} & {\color{red} 1} & {\color{red} 0} & {\color{red} 1} & {\color{red} 0} & {\color{red} 0} \\ \cline{2-11}
	$x_{a}(8)=0$ & 0 & 1 & 1 & 0 & 0 & 0 & 0 & 0 & 1 & 0 \\ \cline{2-11}
	{\color{red} $x_{a}(9)=1$} & {\color{red} 0} & {\color{red} 0} & {\color{red} 0} & {\color{red} 1} & {\color{red} 0} & {\color{red} 0} & {\color{red} 1} & {\color{red} 0} & {\color{red} 0} & {\color{red} 1} \\ \cline{2-11}
	\multicolumn{1}{c}{} & \multicolumn{1}{c}{{\color{blue}1}} & \multicolumn{1}{c}{{\color{blue}1}} & \multicolumn{1}{c}{{\color{blue}1}} & \multicolumn{1}{c}{{\color{blue}1}} & \multicolumn{1}{c}{{\color{blue}1}} & \multicolumn{1}{c}{{\color{blue}1}} & \multicolumn{1}{c}{{\color{blue}1}} & \multicolumn{1}{c}{{\color{blue}1}} & \multicolumn{1}{c}{{\color{blue}1}} & \multicolumn{1}{c}{{\color{blue}1}} \\
	\end{tabular}	
	\caption{}
	\label{table:mat4}
\end{subtable}
\caption{(a) Original constellation matrix based on high correlation among nodes; (b-d) Transformed matrix according to active set of nodes shown in Fig.~\ref{fig:valid_active}, \ref{fig:invalid_active} and \ref{fig:opt_active} respectively.}
\label{table:const_matrix}
\end{table*}

A cost is associated with each link. The link cost between nodes $i$ and $j$ is set as the energy required to transmit one packet from node $i$ to node $j$ and is calculated as, 
\newcommand{\twopartdef}[4]
{
	\left\{
		\begin{array}{ll}
			#1 & #2 \\
			#3 & #4
		\end{array}
	\right.
}
\begin{equation}
e_{l}(i,j) = \twopartdef { e_{tx}(i) + e_{rv}(j), } {\mbox{if $l(i,j)=1$};} {\infty} {\mbox{otherwise},}
\label{eq:cost}
\end{equation}
where $e_{tx}(i)$ and $e_{rv}(j)$ are the energy cost for node $i$ and $j$ to transmit and receive a packet respectively. Since the sensor nodes are homogeneous and they are assumed to send packet at the same transmit power, the energy cost for each link is the same. However, wireless links are not homogeneous, and the link quality varies due to varying packet loss, interference, etc. As a result, packets need to be retransmitted from the sender. We try to accommodate this aspect in our design. If the packet delivery probability between node $i$ and $j$ is denoted as $P_{d}(i,j)$, then the link cost in (\ref{eq:cost}) is modified as,
\begin{equation}
e_{l}(i,j) = \twopartdef {[e_{tx}(i) + e_{rv}(j)]/P_{d}(i,j), } {\mbox{if $l(i,j)=1$};} {\infty} {\mbox{otherwise}.}
\label{eq:linkcost}
\end{equation}
We assume that there is only one sink node ($s_{0}$) in the WSN and it is a special node with unlimited or sufficiently enough resources. The sink node is always active, i.e., $x_{a}(0)=1$ to receive the data. Thus, its energy consumption is not considered while minimizing the energy consumption of the whole network. The constellations of nodes are represented using a matrix. Suppose, $\boldsymbol{C}$ is the constellation matrix. At the end of a training period, VSF creates the constellation matrix, where $c(i,j)$ (and also $c(j,i)$)) is associated with the correlation between the sensed data of nodes $i$ and $j$. If the correlation is above certain threshold, the nodes are assigned to the same constellation i.e., $c(i,j)=1$, and $c(i,j)=0$ otherwise. VSF provides the constellations of sensor nodes available in a WSN. The sink node forms a constellation containing only itself. A small WSN and its constellations are shown in Fig.~\ref{fig:const} and Table~\ref{table:mat1} respectively.

The optimal subgraph, i.e., the set of active nodes ($A$) and also the links connecting them to the sink ($L'$), need to be changed over time. Otherwise, one set of active nodes will drain their energy, and these nodes will become unavailable soon. As a result, the network might become disconnected. To avoid such a situation, a better approach would be to balance the energy consumption amongst all the nodes within the network. To achieve this, the nodes with more residual energy should be preffered to be selected as active node. Thus, we calculate the weight of a sensor node $i$ as,
\begin{equation}
e_{a}(i) = e_{active} + e_{sense} + (E_{res} - e_{res}(i)),
\label{eq:nodecost}
\end{equation}
where $e_{active}$ is the energy required for any node to be active, $e_{res}(i)$ is the residual energy of $i$, and $E_{res}$ is the maximum residual energy over all the sensor nodes in the network ($\max_{k \in S}\{{e_{res}(k)}\}$). Clearly, the sensor node with more residual energy will have smaller value of $e_{a}$ (bacause of smaller value of $E_{res} - e_{res}(i)$). As a result, it will be preferred to be selected over a sensor node with relatively less residual energy, i.e. larger $e_{a}$. At the beginning of every operational period (see Fig.~\ref{fig:vs_train}), a new optimal subgraph (new set of active nodes and links) are selected. By incorporation the residual energy of a sensor node in calculating its active cost, we ensure fairness in energy comnsumption among the sensor nodes.

Now, considering there is total $n+1$ nodes in the WSN including the sink node, the minimization problem for Sleep-Route is defined as, 

\begin{alignat}{2}
\min & \sum_{i=1}^{n} e_{a}(i)\; x_{a}(i) + \sum_{i=0}^{n-1} \sum_{j=i+1}^{n} e_{l}(i,j)\; x_{l}(i,j) \label{eq:min_func}
\end{alignat}
subject to:
\begin{alignat}{3}
\sum_{i=0}^{n-1} \sum_{j=i+1}^{n} x_{l}(i,j) & = \sum_{i=0}^{n} x_{a}(i) - 1 \label{eq:tree} \\
\sum_{i=0}^{n-1} \sum_{j=i+1}^{n} x_{l}(i,j)v(i)v(j) & \leq \sum_{i=0}^{n} x_{a}(i)v(i) - 1, \label{eq:noloop}
\end{alignat}
for all possible binary vector ($\boldsymbol{v}$) of length $(n+1)$.
\begin{alignat}{3}
2x_{l}(i,j) & \leq x_{a}(i) + x_{a}(j), & ~~~~ \forall i, \; \forall j \label{eq:endpoints} \\
\sum_{i=0}^{n} x_{a}(i) c(i,j) & \geq 1, & ~~~~ \forall j \label{eq:representation}
\end{alignat}

To maintain connectivity, for each node in $A$, there should be at least one active link adjacent to it. In fact the minimum number of links that connects a set of nodes forms a tree. If there are $k$ nodes in $A$, then there should be $k-1$ links in $L'$. Constraint (\ref{eq:tree}) ensures this cardinality condition. But $k-1$ links does not always ensure connectivity (in case of cycles). To remove any cycle from the graph, all the sub-tours need to be eliminated. If $A'$ represents any subset of sensor nodes and $E(A')$ represents the links among the sensor nodes in $A'$, then $\sum_{e\in E(A')} e \leq |A'| - 1$ ensures no possible sub-tour. To select a subset of sensor nodes, a vector ($|\boldsymbol{v}|=n+1$) is used, where $v(i)=1$ represents $i\in A'$, and $v(i)=0$ otherwise. Considering $v(0)=1$ (for the sink node), there can be $2^n$ possible vectors representing all possible subsets of sensor nodes, and for each subset, sub-tour needs to be eliminated (constraint (\ref{eq:noloop})).

The requirement of forming a connected tree amongst selected active nodes cannot be guaranteed by constraints (\ref{eq:tree}) and (\ref{eq:noloop}). To ensure that all the active nodes are connected, the end points of every selected link, i.e., both the nodes connected via the link have to be in the active set of nodes. The constraint (\ref{eq:endpoints}) ensures that if $(i,j) \in L'$, then its end points are also in $A$. However, the opposite is not true, i.e., even if a sensor node is in $A$, all of its existing links may not be in $L'$.  

The final constraint (\ref{eq:representation}) ensures that every constellation has at least one active node as representative. However, in some cases more than one node can be selected to be active in order to maintain connectivity. Let us take an example to clarify whether (\ref{eq:representation}) can satisfy the criteria. Suppose, $\boldsymbol{x_{a}}=[1110011000]^{T}$ represents the decision vector for sensing nodes (Fig.~\ref{fig:valid_active}). According to the L.H.S. of (\ref{eq:representation}), this decision vector transforms the constellation matrix as shown in Table~\ref{table:mat2}. As the sum of the elements of each columns are greater than 1, this decision vector fulfills the criteria of representing each constellation by at least one active node. On the other hand, a decision vector like $x_{a}=[1110001000]^{T}$, cannot fulfill the criteria (shown in Table~\ref{table:mat3}). As there is no representation from the constellation with sensor node 4, 5 and 7, it becomes an invalid decision vector (Fig.~\ref{fig:invalid_active}). A better choice of active set of nodes is shown in Fig.~\ref{fig:opt_active}, which also satisfies the representation criteria (Table~\ref{table:mat4}).

\subsection{Solution by Heuristic}
As the number of constraints (\ref{eq:noloop}) grow exponentially with the number of sensor nodes, a solution by ILP is not feasible. To solve the Sleep-Route problem in polynomial time, we propose a heuristic algorithm. The heuristic algorithm uses approximation at two stages. At first, we reduce the general graph into a tree. Then, we select a set of active nodes based on some criteria such that all the constellations are represented. During the node selection process, we might select more than one active node from each constellation to maintain connectivity. If a node is selected as active, its parent node must be active to ensure connectivity. The child-parent relationship is derived from the tree formed during the first phase of the heuristic.

To find a tree, we have used the Dijkstra's algorithm, where a shortest path is found from any sensor node to the sink node. By shortest path, we refer to the minimum costl path to send a packet from the sensor node to the sink node. The shortest path algorithm removes some of the links from the adjacency matrix (i.e., the link cost is set to $\infty$) and ensures that every node has exactly one parent node.

After forming the tree, a set of active nodes is selected (based on some criteria) representing all the constellations of the network. Additional sensor nodes can be selected such that every node's parent node is also in the active set of nodes. Then, all such links are selected for whom both the end points (sensor nodes) are selected to be active. This ensures all the constraints [\ref{eq:tree}-\ref{eq:representation}]. The optimality depends on the criteria to select the active node.

We first employ a randomized selection of active nodes (Algorithm~\ref{algo:rand1}). The algorithm takes the global set of sensor nodes ($S$) and partition them into two subsets - active ($A$) and dormant ($D$). First, the algorithm removes the sink node from the global set ($S$) and adds it to the active set ($A$) (line 2,3). Then, the algorithm starts removing a sensing node ($s$) randomly from the remaining sensor nodes of the global set. The recently selected node is then added to the active set. To ensure connectivity, its parent node (also the parent's parent) is also need to be assigned to the active set. We define a recursive algorithm to decide a node's state (Algorithm~\ref{algo:marknode}). First, the selected node is removed from $S$ and assigned to $A$ (line 1-2). Then we find the parent node of $s$ (line 3). If the parent node is not already assigned to the active set, we recursively call the algorithm to assign it to the active set (line 4-6). Next, all the sensor nodes that belongs to the same constellation with $s$, are removed from the global set and added to the dormant set (line 7-12). The algorithm (Algorithm~\ref{algo:rand1}) continues, until the global set becomes empty.

\begin{algorithm}
\caption{A randomized algorithm for selecting a subset sensor nodes as active.}
\label{algo:rand1}
\begin{algorithmic}[1]
\Statex \textbf{Input:} Set of all sensors in the network ($S$), set of all existing link ($L$), the sink node ($s_{0}$), and the constellation matrix ($C$).
\Statex \textbf{Output:} Two set of nodes - Active ($A$) and Dormant ($D$).
\State $[S,\hat{L}] \leftarrow shortestPathTree(S,L)$;
\State $S \leftarrow S - s_{0}$;
\State $A \leftarrow A + s_{0}$;
\State $D \leftarrow {\phi}$;
\While{isNotEmpty($S$)} 
	\State randomly select a sensor node $s$ from $S$;
	\State $markNodeState(s)$;
\EndWhile
\end{algorithmic}
\end{algorithm}

\begin{algorithm}
\caption{\emph{markNodeState:} It assigns a sensor node (and its parent) to the active set($A$), and all the sensor nodes that belongs to the same constellation to the dormant set ($D$).}
\label{algo:marknode}
\begin{algorithmic}[1]
\State $S \leftarrow S - s$;
\State $A \leftarrow A + s$;
\State $s' \leftarrow parent(s)$;
\If{ $s'$ is in $S$ }
	\State $markNodeState(s')$;
\EndIf
\For {all $\hat{s}=1:n$}
	\If{$c(s,\hat{s}) = 1$ AND $\hat{s}$ is in $S$} 
		\State $S \leftarrow S - \hat{s}$;
		\State $D \leftarrow D + \hat{s}$;
	\EndIf 
\EndFor
\end{algorithmic}
\end{algorithm}

Even though the randomized algorithm tries to select one active node per constellation and ensures availability of a path from each sensing node to the sink node, it does not guarantee the minimum cost active set of nodes. There are two reasons - first, the random selection of sensing nodes imposes more number of active nodes (to maintain connectivity); second, the optimal set of nodes, i.e., the sensor nodes with minimum active cost might not be selected. To solve these issues, we employ a deterministic algorithm that follows a greedy approach. 

Instead of selecting the sensing nodes randomly, the greedy algorithm uses selection probability for each sensor node based on their active cost (See~\ref{eq:nodecost} for active cost calculation). The node with minimum active cost is assigned the highest selection probability. If a node has the highest selection probability among the remaining nodes in $S$, it is assigned to the active set. The node assignment is the same as the previous (Algorithm~\ref{algo:marknode}). As the heuristic algorithm does not consider the link cost and node cost together, while assigning the state, an optimal solution might not be found. To find the possible worst-case performance of the heuristic, we have tried another version of the node selection criteria, where a node with lowest selection probability is assigned to the active set. The number of active nodes selected using the deterministic algorithm is lower than for the randomized algorithm. The results shown in the next section confirms this and the overall energy consumption is much closer to the integer programming solution as compared to the randomized algorithm.

\section{Evaluation}
\label{sec:eva}
Various relevant parameters and energy costs for a Tmote sky node are listed in Table~\ref{settings}.
\begin{table}
\centering
\caption{System Parameters and Settings}
\begin{tabular}{|l|l|} \hline
\textbf{Parameter} & \textbf{Value} \\ \hline
Initial energy at every sensor node & 100\,J \\ \hline
Message size & 128\,B \\ \hline
Transmission power & 5\,dBm \\ \hline
Energy cost for sending a message & 450\,$\mu$J \\ \hline
Energy cost for receiving a message & 400\,$\mu$J \\ \hline
Energy cost for sensing temperature & 20.30\,$\mu$J \\ \hline
Energy cost in active mode & 4.898\,mW \\ \hline
Energy cost in sleeping mode & 0.144\,mW \\
\hline\end{tabular}
\label{settings}
\end{table}
In our evaluation, we measured energy consumption by all the sensor nodes during the operational period (Fig.~\ref{fig:vs_train}). As mentioned earlier, a new subset of active nodes are selected before every operational period. We have fixed the length of an operational period to 20 samples. Sensor nodes collected data every 31\,s; thus, an operational period lasts for 620\,s. Even though we did not assume any particular MAC protocol, the duty cycling is fixed to 5\% for the active nodes. So, if a node is assigned as active, the minimum energy consumption by the sensor node for the period is: $T_{p}*d_{cycle}*e_{active} + T_{p}*(1-d_{cycle})*e_{sleep} = (620*0.05*4.898 + 620*0.95*0.144) \;mJ =\; 236.654 \;\;mJ$,
where $T_{p}$ is the duration of the operational period, $d_{cycle}$ is the duty-cycling of active nodes, $e_{active}$ is the energy required for a node to be active, and $e_{sleep}$ is the energy spent by a node in dormant mode. Please note that this cost does not include any communication and sensing cost. In contrast, a sensor node in dormant mode consumes only $(620*0.144) \;mJ = 89.28 \;mJ$ during the operational period.

The existence of an edge between two nodes is decided based on the link quality between two nodes. The link quality is decided based on the successful packet delivery probability ($p_d$). Thus, we require $1/p_d$ transmissions to deliver a packet successfully. We have used $p_d$ to find the cost of an edge to decide the shortest path tree (See~\ref{eq:linkcost}).
\begin{figure}[]
	\centering
	\includegraphics[width=\linewidth]{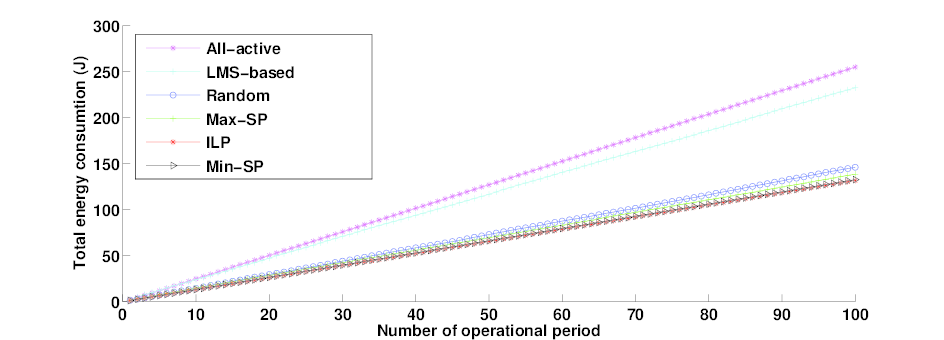}
	\caption{Total energy consumption by all the sensor nodes in the network for various data collection methods.}
	\label{fig:e_spent}
\end{figure}
\begin{figure}[]
	\centering
	\includegraphics[width=\linewidth]{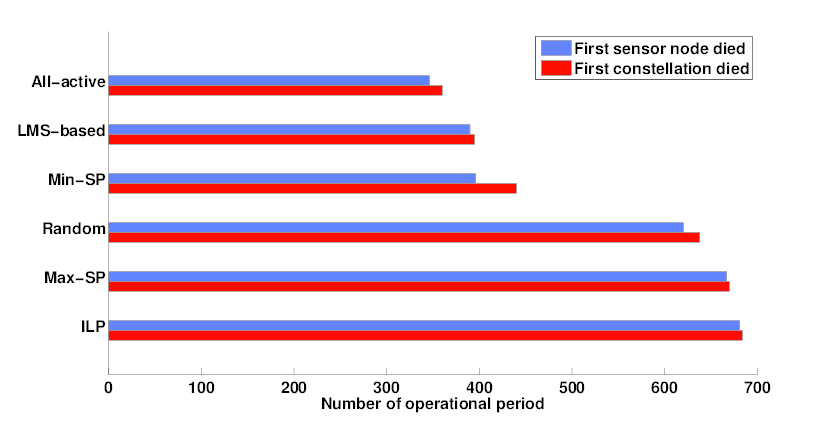}
	\caption{Lifetime of the WSN for various data collection schemes.}
	\label{fig:lifetime}
\end{figure}

Lifetime of a WSN can be defined in many ways, e.g., until the first node dies (runs out of battery), until the remaining alive nodes cannot communicate with the sink, etc. In this work, we define a deployment lifetime as `until all the sensor nodes belonging to a particular constellation dies'. Under this assumption, we show the effectiveness of the Sleep-Route scheme in extending lifetime of a WSN.

As discussed earlier, the ILP formulation for Sleep-Route has exponential number of constraints. As a result, when the number of sensor nodes is large, an optimal Sleep-Route solution via ILP is not feasible. Thus, we have evaluated the heuristic algorithm in comparison with the ILP when applied on the WSN shown in Fig.~\ref{fig:network}. First, we ran the experiment for 100 operational periods. The optimal set of active nodes and links are calculated via ILP before every operational period. Similarly, the active set of nodes is calculated based on the heuristic algorithm. As mentioned earlier, after forming the shortest path tree, the active set of nodes is selected based on three criteria - random selection of active nodes (RS), active node selection based on maximum selection probability (Max-SP), and active node selection based on minimum selection probability (Min-SP).  For each of these methods, the total energy consumption in the whole network is calculated. Additionally, we have calculated the energy consumption, (i)~if all the sensor nodes are active and report their data to the sink node by forming a collection tree, and (ii)~data is collected based on the LMS-based method described by Santini {\it et al.}~\cite{santini2006adaptive}. 
 
For all the schemes, the cumulative energy consumption after 100 operational periods is shown in Fig.~\ref{fig:e_spent}. Please note that the LMS-based scheme restricts data transmission by a node if the sensed data can be predicted by the sink with a tolerable error bound. Due to this, we assumed that only 2 out of 9 sensor nodes report their data in each sensing interval.  These 2 nodes are selected randomly. The remaining 7 nodes differ data transmissions, and hence saving in energy. As expected, the Sleep-Route solution (ILP as well as the heuristics) requires less energy than LMS-based method. Out of three heuristic approaches, the Max-SP based method should perform the best. However, it seems that the Min-SP based method performs exceptionally well. Even if the total energy consumption by Min-SP based heuristic is less than Max-SP based heuristic, the lifetime of the WSN is drastically reduced by Min-SP based heuristic (Fig.~\ref{fig:lifetime}).

The Sleep-Route scheme tries to rotate the set of active nodes by considering the residual energy available in each sensor node. Otherwise, some sensor nodes might die early, which eventually may lead to shorter lifetime. The selection probability tries to impose this behavior on the selection of nodes, i.e., the sensor node with more residual energy has higher chance of getting selected. However, selecting nodes with minimum selection probability, i.e., less residual energy can impose more burden on some nodes. As a result the lifetime of a WSN can be decreased in spite of the fact that the overall energy consumption at every operational period is less. Fig.~\ref{fig:lifetime} supports this argument.
\section{Conclusion}
We define the \emph{Sleep-Route} problem that arises in a WSN consists of spatio-temporally correlated sensor nodes. We prove that the problem is NP-hard. Using a polynomial time greedy-approach based heuristic algorithm, we solve the problem near optimally. The solution improves the lifetime of a WSN. In future, we plan to evaluate our algorithm on a real sensor deployment. Further it is interesting to consider situation where some nodes could be powered by mains power\cite{vazifehdan2012analytical}. This is an interesting extension.




%
\bibliographystyle{abbrv}
\bibliography{sleeproute}

\begin{thebibliography}{1}

\bibitem{nrel}
Nrel: Lowry range solar station, colarado state land board [online].
\newblock http://www.nrel.gov/midc/.

\bibitem{sensorscope}
Sensorscope: Lausanne urban canopy experiment deployment [online].
\newblock http://lcav.epfl.ch/cms/lang/en/pid/86035.

\bibitem{ihler1999class}
E.~Ihler, G.~Reich, and P.~Widmayer.
\newblock Class steiner trees and vlsi-design.
\newblock {\em Discrete Applied Mathematics}, 90(1):173--194, 1999.

\bibitem{karp1972reducibility}
R.~M. Karp.
\newblock {\em Reducibility among combinatorial problems}.
\newblock Springer, 1972.

\bibitem{madden2004}
S.~Madden.
\newblock Intel berkeley research lab data.
\newblock http://db.csail.mit.edu/labdata/labdata.html, 2004.
\newblock [accessed 30-Nov-2013].

\bibitem{prasad2014reincarnation}
R.~V. Prasad, S.~Devasenapathy, V.~S. Rao, and J.~Vazifehdan.
\newblock Reincarnation in the ambiance: Devices and networks with energy
  harvesting.
\newblock {\em IEEE Communications Surveys \& Tutorials}, 16(1):195--213.

\bibitem{santini2006adaptive}
S.~Santini and K.~Romer.
\newblock An adaptive strategy for quality-based data reduction in wireless
  sensor networks.
\newblock In {\em Proceedings of the 3rd International Conference on Networked
  Sensing Systems (INSS 2006)}, pages 29--36, 2006.

\bibitem{sarkar2013nosense}
C.~Sarkar, V.~S. Rao, and R.~V. Prasad.
\newblock No-sense: Sense with dormant sensors.
\newblock In {\em Communications (NCC), 2014 Twentieth National Conference on},
  pages 1--6. IEEE, 2014.

\bibitem{vazifehdan2012analytical}
J.~Vazifehdan, R.~V. Prasad, M.~Jacobsson, and I.~Niemegeers.
\newblock An analytical energy consumption model for packet transfer over
  wireless links.
\newblock {\em IEEE Communications Letters}, 16(1):30--33, 2012.

\end{thebibliography}

\end{document}